\newtheorem{proposition}{Proposition}
\newcommand{\A}{\bm A}
\newcommand{\G}{\bm G}
\newcommand{\ttheta}{\mathbf \Theta}
\begin{document}
\title{Beamforming Optimization for Intelligent Reflecting  Surface with Discrete  Phase Shifts }
\name{Qingqing Wu and Rui Zhang}
 \vspace{-3mm}
\address{Department of Electrical and Computer Engineering, National University of Singapore, Singapore \\
    Email:\{elewuqq, elezhang\}@nus.edu.sg}

\maketitle

\begin{abstract}
Intelligent reflecting surface (IRS) is  a cost-effective solution  for achieving high spectrum and energy efficiency in future wireless communication systems  by leveraging massive low-cost passive elements that are able to reflect the signals with adjustable phase shifts. Prior  works on IRS mostly consider continuous phase shifts at each reflecting element, which however, is practically difficult to realize due to the hardware limitation.  In contrast, we study in this paper an IRS-aided wireless network, where an IRS with only a finite number of phase shifts at each element  is deployed to assist in  the communication from a multi-antenna access point (AP) to a single-antenna user. We aim to minimize the transmit power at the AP by jointly optimizing the continuous  transmit beamforming at the AP and discrete reflect beamforming  at the IRS, subject to a given signal-to-noise ratio (SNR) constraint at the user receiver. We first propose a suboptimal and  low-complexity solution to the problem  by exploiting the alternating optimization technique. Then, we analytically show that as compared to the ideal  case with continuous phase shifts,  the IRS with discrete phase shifts achieves the same squared power gain with asymptotically large number of reflecting elements, while a constant performance loss is incurred that depends only on the number of phase-shift levels.
Simulation results verify our analytical result as well as the effectiveness of our proposed design as compared to different benchmark schemes.
\end{abstract}

\begin{keywords}
Intelligent reflecting surface, passive array, beamforming, discrete phase shifts.
\end{keywords}

\section{Introduction}
\vspace{-0.2cm}
Although massive multiple-input multiple-output (MIMO) technology has significantly improved the spectrum and energy efficiency of wireless communication systems, the required high complexity and high  hardware cost is still  the main hindrance  to its implementation in practice, especially at higher frequencies such as those in the millimeter-wave (mmWave) band   \cite{Hien2013,foad16jstsp}.  Recently, intelligent reflecting surface (IRS) has emerged as a new and  cost-effective solution for achieving  high beamforming and/or interference suppression gains  via only low-cost reflecting elements. An IRS is generally composed of a large number of passive elements each able to reflect the incident signal with an adjustable phase shift. By intelligently tuning the phase shifts of all elements adaptive to dynamic wireless channels,  the  signals reflected by the IRS can add constructively or destructively with non-reflected signals  at the user receiver to boost the received  signal power or suppress the co-channel interference, thus drastically enhancing the wireless network performance without the need of deploying additional active transmitters/relays.

\begin{figure}[!t]
\centering
\includegraphics[width=0.4\textwidth]{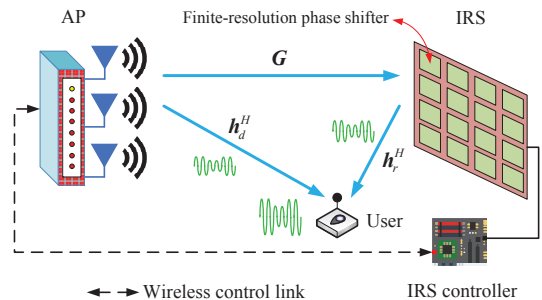}
\vspace{-0.4cm}
\caption{An IRS-aided  wireless system. } \label{system:model}\vspace{-0.6cm}
\end{figure}




Prior works  on IRS-aid wireless communication can be found in e.g.  \cite{wu2018IRS,JR:wu2018IRS,huangachievable,tan2016increasing,tan2018enabling}.
Specifically, for the IRS-aided wireless system with a single user,  it was shown in  \cite{wu2018IRS} that the IRS is capable of creating a ``signal hotspot'' in its vicinity via joint active beamforming at the access point (AP) and passive beamforming at the IRS. In particular, an asymptotic receive power gain in the order of  $\mathcal{O}(N^2)$ in terms of the number of reflecting elements at the IRS, $N$, as $N\rightarrow \infty$ was revealed in \cite{wu2018IRS}. Such a squared  power gain is larger than that of massive MIMO, i.e., $\mathcal{O}(N)$  \cite{Hien2013}, which is fundamentally due to the fact that the IRS combines the  functionalities of both receive and transmit antennas, thus doubling the gain.  For multiuser systems, it was shown in \cite{JR:wu2018IRS,huangachievable} that besides enhancing the desired signal power at the receiver,  a nearly ``interference-free'' zone can be established in the proximity of the IRS, thanks to its spatial interference nulling/cancellation  capability. However, all the above benefits are achieved by assuming an IRS with continuous  phase shift at each reflecting element  which is practically costly to implement due to the hardware limitation  \cite{tan2016increasing,tan2018enabling}.
  Although  \cite{tan2016increasing,tan2018enabling} considered the use of finite-level phase shifters for IRS, the optimal phase shifts at all elements were  obtained by exhaustive search, which, however,  is computationally prohibitive for practical IRS with a large number of elements. Thus, more efficient discrete-phase reflect beamforming for the IRS jointly with the continuous transmit beamforming for the AP needs to be designed. In addition, the performance gap between the ideal case of continuous phase shifts and the practical case with arbitrary number of discrete phase shift levels has yet to be investigated for the IRS.

Motivated by the above, in this paper we consider an IRS-aided wireless communication system as shown in Fig. \ref{system:model}, where  a multi-antenna AP serves a single-antenna user with the help of an IRS.  Although the system setup is same as that in \cite{wu2018IRS}, we consider the practical case where  the IRS only has a finite number of discrete  phase shifts in contrast to the continuous phase shifts considered in \cite{wu2018IRS}.  Similar to \cite{wu2018IRS}, we aim to  minimize the transmit power required at the AP via jointly optimizing the  active transmit  beamforming at the AP and passive reflect beamforming at the IRS (with discrete phase shifts), subject to a given signal-to-noise ratio (SNR) constraint at the user receiver. As this problem is  non-convex, we propose a low-complexity algorithm to solve it sub-optimally  by leveraging the alternating optimization (AO) technique. Specifically, the optimal discrete phase shifts of all elements are determined one by one in an iterative manner with those of the others being fixed.
 Moreover,  we analytically show that when the number of reflecting elements of the IRS, $N$, increases, the performance loss with discrete phase shifts  from that with continuous phase shifts is a constant that depends only on the number of phase-shift levels at each element, but regardless of $N$ as $N\rightarrow \infty$.
As a result, the asymptotic squared power gain of the IRS shown in \cite{wu2018IRS} with continuous phase shifts still holds with discrete phase shifts.
  Simulation results validate our analysis and also demonstrates the significant power saving at the AP by using IRS even with discrete phase shifts.
%

\vspace{-0.2cm}
\section{System Model}
\vspace{-0.2cm}
As shown in Fig. \ref{system:model}, we consider a multiple-input single-output (MISO) wireless system where an IRS composed of $N$ reflecting elements is deployed to assist in the communication from an AP with $M$ antennas to a single-antenna user. While this paper focuses on the downlink communication, the results and analysis are applicable to the uplink communication as well.  In practice, the IRS is attached  with a controller which communicates with the AP via a separate wireless link for coordination and exchanging information on channel knowledge and accordingly adjusts the phase shifts of all elements \cite{tan2016increasing,subrt2012intelligent}.
Due to the substantial path loss, we only consider one single signal reflection by the  IRS and ignore the signals that are reflected by the IRS two or more times.
 In addition, we assume a quasi-static flat-fading model for all the channels involved.

Denote by $\bm{h}^H_d\in \mathbb{C}^{1\times M}$,  $\bm{h}^H_r\in \mathbb{C}^{1\times N}$, and $\bm{G}\in \mathbb{C}^{N\times M}$, the baseband equivalent channels of the  AP-user link, IRS-user link, and AP-IRS link, respectively, where the superscript  $H$ denotes the conjugate transpose operation and
$\mathbb{C}^{x\times y}$ represents a $x\times y$ complex-valued matrix.
Note that the AP-IRS-user channel  is typically referred to as dyadic backscatter channel in the literature \cite{griffin2009complete}, which  resembles a keyhole/pinhole propagation \cite{paulraj2003introduction}. Specifically, each element at the IRS first combines all the received multi-path signals, and then re-scatters the combined signal with a certain phase shift as if from a point source, thus leading to a ``multiplicative'' channel model.  
Let  $\ttheta  = \text{diag} (\beta e^{j\theta_1}, \cdots, \beta e^{j\theta_n}, \cdots,  \beta e^{j\theta_N})$, with $\text{diag}(\mathbf{a})$ denoting a diagonal matrix with its diagonal elements given in the vector $\mathbf{a}$ and  $j$  representing the imaginary unit,  denote the phase-shift matrix of the IRS, where $\theta_n\in [0, 2\pi]$ and $\beta \in [0, 1]$ are the phase shift and amplitude reflection coefficient of each element, respectively.
 In practice, it is usually desirable to maximize the signal reflection by the IRS. Thus, for simplicity,  we set $\beta = 1$ in the sequel of this paper.
   For ease of practical implementation, we consider that the phase shift at each element of the IRS can only take a finite number of discrete values, which are equally spaced in $[0, 2\pi)$. Denote by $b$ the number of bits used to represent each of the levels. Then the set of phase shifts at each element is given by $\mathcal{F}= \{0,\Delta\theta, \cdots, \Delta\theta( {K}-1) \}$ where $\Delta\theta= 2\pi/K$ and $K=2^b$.

At the AP,  we consider the conventional continuous transmit  beamforming with  $\bm{w}\in \mathbb{C}^{M\times 1}$  denoting the transmit beamforming vector. The total transmit power is given by $\|\bm{w}\|^2$, where  $\|\cdot\|$ denotes the Euclidean
norm of  a complex vector. The signal directly from the AP and that reflected by the IRS are combined at the user receiver, i.e.,
\begin{align}\label{SectionII:channel}
y (\ell)= ( \bm{h}^H_r\Theta \bm{G} +  \bm{h}^H_d)\bm{w}s(\ell) + z(\ell), \ell =1, 2, \cdots,
\end{align}
where  $\ell$ denotes the symbol index, $s(\ell)$'s denote the information-bearing symbols which are modeled as independent and identically distributed (i.i.d.) random variables with zero mean and unit variance, and $z(\ell)$'s denote i.i.d. additive white Gaussian noise (AWGN) at the receiver with zero mean and variance $\sigma^2$. Accordingly, the user receive SNR  is given by
\begin{align}\label{SectionII:receivedpower}
\rho ={ {|( \bm{h}^H_r\Theta \bm{G}+\bm{h}^H_d)\bm{w} |^2}}/{\sigma^2}.
\end{align}
\vspace{-0.9cm}
\section{Problem Formulation}
\vspace{-0.2cm}
Let $\gamma$ denote the SNR requirement of the user and $\bm{\theta}= [\theta_1, \cdots, \theta_N]$.  In this paper, we aim to minimize  the total transmit power at the AP by jointly optimizing the transmit beamforming $\bm{w}$ and phase shifts $\bm{\theta}$, subject to the SNR constraint as well as discrete phase-shift constraints.
The corresponding optimization problem is formulated as
\begin{align}
\text{(P1)}: ~~\min_{\bm{w}, \bm{\theta}} ~~~& \|\bm{w}\|^2  \label{eq:obj}\\
\mathrm{s.t.}~~~~&| (\bm{h}^H_r\ttheta \bm{G}+\bm{h}^H_d )\bm{w}|^2   \geq \gamma \sigma^2,   \label{SINR:constraints} \\
&\theta_n \in \mathcal{F}, \forall n. \label{phase:constraints}
\end{align}
Problem (P1) is a non-convex since the left-hand-side (LHS) of \eqref{SINR:constraints} is not jointly concave with respect to $\bm{w}$ and $\bm{\theta}$, and the constraints in \eqref{phase:constraints} restrict $\theta_n$'s  to be discrete values. In general, there is no standard method for obtaining the optimal solution to such a non-convex optimization problem efficiently. An exhaustive search over all possible combinations of discrete phase shifts at all elements incurs an exponential complexity of order $O(2^{bN})$, which is prohibitive for practical systems with large $N$.

\vspace{-0.2cm}
\section{Proposed Algorithm}
\vspace{-0.2cm}
In this section, we propose a low-complexity algorithm to solve (P1) sub-optimally  based on the AO technique. Specifically, we alternately optimize one of the $N$ phase shifts in an iterative manner by fixing the other $N-1$ phase shifts, until the convergence is achieved.

For any given phase shift $\bm{\theta}$, it is known that the maximum-ratio transmission (MRT) is the optimal transmit beamforming solution  to (P1) \cite{tse2005fundamentals}, i.e.,
$\bm w^* = \sqrt{ p} \frac{(\bm{h}^H_r\ttheta \bm{G}+\bm{h}^H_d  )^H}{\|\bm{h}^H_r\ttheta \bm{G} +\bm{h}^H_d \|}$, where $p$ denotes the transmit power at the AP. By substituting $\bm w^*$ to (P1),
we obtain the optimal transmit power as $p^* =  \frac{\gamma\sigma^2}{\|\bm{h}^H_r\ttheta \bm{G}+ \bm{h}^H_d\|^2}$. As such, minimizing the transmit power is equivalent to maximizing the channel power gain of the combined channel, i.e.,
\begin{align}\label{secIII:p3}
\text{(P2)}: ~~\max_{\bm{\theta}} ~~~&\|\bm{h}^H_r\ttheta \bm{G}+ \bm{h}^H_d\|^2\\
\mathrm{s.t.}~~~~&\theta_n \in \mathcal{F}, \forall n. \label{SecIII:phaseconstraint}
\end{align}
Let $\bm{\Phi}=\text{diag}(\bm{h}^H_r)\bm{G} \in \mathbb{C}^{N \times M}$, $\A=\bm{\Phi}\bm{\Phi}^H$, and $\bm{\hat h}_d= \bm{\Phi}\bm{h}_d$. Denote by $\A_{n,k}$ and  $\bm{\hat h}_{d,n}$ the $(n,k)$th and $n$th elements in $\A$ and $\bm{\hat h}_d$, respectively.
Then the key to solving (P2) by applying AO lies in the observation that for a given $n\in\{1,...,N\}$,  by fixing $\theta_k$'s, $\forall k\neq n$, the objective function of (P2) is  linear with respect to $e^{j\theta_n}$, which can be written as
{\small\begin{align}\label{obj:phase}
&\! \!\!\!2\mathrm{Re}\!\left\{e^{j\theta_n}\zeta_n\right\} \!+\! \sum_{k\neq n}^{N}\sum_{i\neq n}^{N}\A_{k,i}e^{j(\theta_k-\theta_i)} +C
\end{align}}where $\zeta_n \!=\! \!\sum_{k\neq n}^{N}\A_{n,k}e^{-j\theta_k} \!+\!   \bm{\hat h}_{d,n}\! =\! |\zeta_n |e^{-j\varphi_n}$ and $C\!= \! \A_{n,n} \!+ \!2\mathrm{Re} \Big\{\sum_{k\neq n}^{N}e^{j\theta_k}\bm{\hat h}_{d,k}\Big\}  \! + \!\|\bm{\hat h}_d\|^2$, with $ \mathrm{Re}\{\cdot\}$ denoting the real part of a complex number. Based on \eqref{obj:phase}, it is not difficult to verify that the optimal $n$th phase shift is given by
\begin{align}\label{optimal:phase}
\theta^*_n = \arg \min_{\theta \in \mathcal{F}  } | \theta -\varphi_n|.
\end{align}
By successively  setting the phase shifts of all elements based on (9) in the order from $n=1$ to $n=N$ and repeatedly, the objective value of (P2) is non-decreasing. Since the optimal value of  (P2) is upper-bounded by a finite value, the proposed algorithm is guaranteed to converge. With the converged discrete phase shifts, the optimal transmit power $p^*$ is obtained accordingly.

Note that the above algorithm requires a proper choice of initial discrete phase shifts, which can be obtained by first solving (P1) with the discrete phase-shift constraints \eqref{phase:constraints} replaced by their continuous phase-shift counterparts, i.e.,  $0\leq \theta_n \leq 2\pi, \forall n$ (see [3] for an  algorithm to solve this problem), and  then quantizing the  continuous phase shifts obtained to their nearest points in $\mathcal{F}$ similarly as \eqref{optimal:phase}.  
\vspace{-0.2cm}
\section{PERFORMANCE ANALYSIS}
\vspace{-0.2cm}
In this section, we characterize the scaling law of the average received power at the user with respect to the number of reflecting elements, $N$, as $N\rightarrow \infty$ in an IRS-aided system with discrete phase shifts. For simplicity, we assume $M=1$ with $\G\equiv \bm{g}$ to obtain essential insight. We also assume that the signal received at the user from the AP-user  link can be practically ignored for asymptotically large $N$ since in this case the reflected signal power dominates the total received power at the user.
Thus,  the user's average received  power with $b$-bit phase shifts is approximately given by $P_r(b)\triangleq  \mathbb{E}(|h^H|^2) = \mathbb{E}(|\bm{h}^H_r\ttheta \bm{g}|^2)$ where $\bm{\theta}$ is given by the discrete phase-shift  initialization solution in the proposed algorithm in Section 4.
\begin{proposition}\label{scaling:law}
\emph{\!\!Assume $\bm{h}^H_{r} \!\sim \!\mathcal{CN}(\bm{0},\!\varrho^2_h{\bm I})$ and \! $\bm{g} \!\sim\! \mathcal{CN}(\bm{0},\!\varrho^2_g{\bm I})$.\footnote{  $\mathcal{CN}(\bm{x},{\bm \Sigma})$ denotes the distribution of a circularly symmetric complex Gaussian (CSCG) random vector with mean vector  $\bm{x}$ and covariance matrix ${\bm \Sigma}$, and $\sim$ stands for ``distributed as''.}  As $N\rightarrow \infty$,  we have}
\begin{align}\label{ratio}
\eta (b) \triangleq  \frac{P_r(b)}{P_r(\infty)} = \Big(\frac{2^b}{\pi}\sin\left(\frac{\pi}{2^b}\right)\Big)^2.
\end{align}
\end{proposition}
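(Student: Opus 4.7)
\vspace{2mm}
\noindent\textbf{Proof proposal.} The plan is to expand $P_r(b) = \mathbb{E}|\bm{h}^H_r \ttheta \bm{g}|^2 = \mathbb{E}\big|\sum_{n=1}^{N} h^*_{r,n} g_n e^{j\theta_n}\big|^2$ into a double sum and exploit the CSCG structure to decouple amplitudes from phases. Writing $h^*_{r,n} g_n = |h_{r,n}||g_n|\, e^{j\psi_n}$ with $\psi_n \triangleq \angle g_n - \angle h_{r,n}$, each $\psi_n$ is uniform on $[0, 2\pi)$ (as the wrapped difference of two i.i.d.\ uniform phases), independent of the amplitudes $|h_{r,n}|, |g_n|$, and independent across $n$.

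First, I would pin down the distribution of the quantized reflect beamformer. The initialization in Section~4 picks the continuous-phase optimum $\theta^{\mathrm{cont}}_n = -\psi_n$, which aligns every summand to be real and nonnegative, and then rounds it to the nearest element of $\mathcal{F}$. The resulting quantization error $\epsilon_n \triangleq \theta_n + \psi_n$ is uniform on $[-\pi/K, \pi/K]$ with $K = 2^b$, independent across $n$, and independent of all amplitudes. After this alignment,
\begin{align*}
P_r(b) = \mathbb{E}\Big|\sum_{n=1}^{N} |h_{r,n}||g_n|\, e^{j\epsilon_n}\Big|^2 .
\end{align*}

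Next, I would expand the modulus squared and split into diagonal and off-diagonal terms, using independence to factor expectations. Setting $\mu \triangleq \mathbb{E}[|h_{r,n}||g_n|]$, $\nu \triangleq \mathbb{E}[|h_{r,n}|^2 |g_n|^2] = \varrho_h^2 \varrho_g^2$, and $\alpha_K \triangleq \mathbb{E}[e^{j\epsilon_n}] = \frac{K}{2\pi} \int_{-\pi/K}^{\pi/K} e^{j\epsilon}\, d\epsilon = \frac{K \sin(\pi/K)}{\pi}$, the diagonal ($n=m$) contribution is $N\nu$ and the off-diagonal contribution is $N(N-1)\mu^2 |\alpha_K|^2$, giving $P_r(b) = N\nu + N(N-1)\mu^2 |\alpha_K|^2$. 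The continuous-phase case corresponds to the limit $\alpha_K \to 1$, so $P_r(\infty) = N\nu + N(N-1)\mu^2$.

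Finally, letting $N \to \infty$, both numerator and denominator are dominated by their $\mathcal{O}(N^2)$ coherent-beamforming terms while the $\mathcal{O}(N)$ diagonal contributions vanish in the ratio, yielding $\eta(b) \to |\alpha_K|^2 = \big(2^b \sin(\pi/2^b)/\pi\big)^2$, as claimed. The main subtlety—though ultimately modest—is rigorously justifying that $\epsilon_n$ is exactly uniform on $[-\pi/K, \pi/K]$ and independent across $n$ and from the amplitudes; this rests on the rotational invariance of the CSCG law (making $\psi_n$ uniform and independent of the moduli) together with the fact that nearest-point quantization of a uniform $[0, 2\pi)$ variable produces a uniform error on $[-\pi/K, \pi/K]$ that is statistically independent of the chosen quantization point. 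Once this independence structure is in hand, the scaling argument is elementary.
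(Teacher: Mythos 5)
Your proposal is correct and follows essentially the same route as the paper's proof: align the phases to the continuous optimum, model the quantization error as i.i.d.\ uniform on $[-\pi/2^b,\pi/2^b)$ independent of the Rayleigh amplitudes, split $\mathbb{E}|h^H|^2$ into the $\mathcal{O}(N)$ diagonal and $\mathcal{O}(N^2)$ cross terms, and use $\mathbb{E}[e^{j\epsilon_n}]=\tfrac{2^b}{\pi}\sin(\pi/2^b)$ before taking the ratio as $N\to\infty$. The only cosmetic difference is that you leave $\mu=\mathbb{E}[|h_{r,n}||g_n|]$ symbolic (which suffices, since it cancels in the ratio) whereas the paper evaluates it as $\pi\varrho_h\varrho_g/4$.
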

\begin{proof}
The equivalent channel can be expressed as ${h}^H=   \bm{h}^H_{r} \ttheta\bm{g} =\sum_{n=1}^N |h_{r,n}||g_n|e^{j(\theta_n+\phi_n+\psi_n)}$, where $h^H_{r,n}=|{h}^H_{r}|e^{j\phi_n}$ and $g_n=|g_n|e^{j\psi_n}$ are the corresponding  elements in $\bm{h}^H_{r}$ and $\bm{g}$, respectively. Since $|h_{r,n}|$ and $|g_n|$ are statistically independent and follow Rayleigh distribution with mean values $\sqrt{\pi}\varrho_h/2$ and  $\sqrt{\pi}\varrho_g/2$, respectively, we have $\mathbb{E}(|h_{r,n}||g_n|)=\pi\varrho_h\varrho_g/4$.  Since $\phi_n$ and $\psi_n$ are randomly and uniformly distributed in $[0, 2\pi)$, it follows that $\phi_n+\psi_n$ is uniformly distributed in $[0, 2\pi)$ due to the periodicity over $2\pi$.  As such, the optimal continuous phase shift is given by $\theta^{\star}_n = -(\phi_n +\psi_n)$, $\forall n$ \cite{wu2018IRS}, with the corresponding quantized discrete phase shift denoted by  $\hat{\theta}_n$ which can be obtained similarly as \eqref{optimal:phase}. 
  Define $\bar \theta_n=\hat{\theta}_n- \theta^{\star}_n=\hat{\theta}_n + \phi_n+\psi_n$ as the quantization error. As $\hat{\theta}_n$'s in  $\mathcal{F}$ are equally spaced,   it follows that $\bar \theta_n$'s are independently and uniformly distributed in $[-\pi/2^b, \pi/2^b)$. Then, we have
{\begin{align}
\vspace{-2mm}
&\!\!\!\mathbb{E}(|{h}^H|^2)\!  =\!  \mathbb{E}\left(\left|\sum_{n=1}^N |h_{r,n}||g_n|e^{j{\bar \theta_n}}  \right|^2 \right ) \! \!=\!  \mathbb{E} \!\left(   \sum_{n=1}^N |h_{r,n}|^2|g_n|^2 \right. \nonumber \\
~&~~~~~~~~~~~~\quad~\left.   +        \sum_{n=1}^N \sum_{i\neq n}^N |h_{r,n}||g_n||h_{r,i}||g_i|e^{j{\bar \theta_n}-j{\bar \theta_i}}   \right).
\vspace{-0.3cm}
\end{align}}Note that $h_{r,n}$, $g_n$, and $e^{j\bar \theta_n}$ are independent with each other, with $\mathbb{E}\left(   \sum_{n=1}^N |h_{r,n}|^2|g_n|^2\right)= N\varrho^2_h\varrho^2_g$ and $\mathbb{E}(e^{j\bar \theta_n})=\mathbb{E}(e^{-j\bar \theta_n})= 2^b/\pi\sin\left(\pi/2^b\right)$. It then follows that
\begin{align}\label{eq:pow}
\!\!\!\!P_r(b)&\! =\!N\varrho^2_h\varrho^2_g + N(N-1)\frac{\pi^2\varrho^2_h\varrho^2_g}{16}\Big(\frac{2^b}{\pi}\sin\left(\frac{\pi}{2^b}\right)\Big)^2.
\end{align}
For $b\geq 1$, is is not difficult to verify that $2^b/\pi\sin\left(\pi/2^b\right)$ increases with $b$  monotonically and approaches to $1$  when $b \rightarrow \infty$ (i.e., continuous phase shifts without quantization).  As a result, the ratio of $P_r(b)$ and $P_r(\infty)$ is given by \eqref{ratio} when $N\rightarrow \infty$, which completes the proof.
\end{proof}
Proposition \ref{scaling:law} provides a quantitative measure of  the user received power loss with discrete phase shifts as compared to the ideal case of continuous phase shifts. It is observed that  as $N\rightarrow \infty$, the power ratio $\eta (b)$ depends only on the resolution of phase shifters, but is regardless of $N$. This result implies that even with a practical IRS with discrete phase shifts, the same asymptotic squared power gain of $\mathcal{O}(N^2)$ as that with  continuous phase shifts can be achieved (see  \eqref{eq:pow} with $N\rightarrow \infty$).  As such,  the design of  IRS hardware and control module can be greatly simplified by using discrete phase shifters, without compromising the performance in the large-$N$ regime.  Since $\eta (1)  =0.4053 $, $\eta (2)= 0.8106$,  and $\eta (3)=0.9496$,   using 2 or 3-bit phase shifters is practically  sufficient to achieve close-to-optimal  performance. In general, to achieve a given received power at the user,  there exists an interesting  trade-off between the number of reflecting elements and the resolution of phase shifters used at the IRS.

\vspace{-0.3cm}
\section{Simulation Results}
\vspace{-0.3cm}
 We consider a uniform linear array (ULA) at the AP and a uniform rectangular array (URA) at the IRS, respectively.
   The signal attenuation at a reference distance of 1 meter (m) is set as 30 dB  for all channels. Since the IRS can be practically deployed to avoid blockage between the AP and its covered area in which the user of our interest is usually located, the pathloss exponent of the AP-IRS channel is set to be $2.2$, which is lower than that of the IRS-user channel $(2.8)$, and that of the  AP-user channel $(3.4)$.
To account for small-scale fading, we assume Rayleigh fading for all the channels involved.
 Other  parameters are set as follows:   $\sigma^2=-80$\,dBm, $\gamma=20$\,dB, and $M=5$.

The AP and IRS are assumed to be  located $d_0= 50$ m apart and the user lies on a horizontal line that is in parallel to the one that connects them, with the vertical distance between these two lines equal to  $d_v = 2$ m. Denote the horizontal distance between the AP and  user  by $d$ m. The AP-user and IRS-user link distances are then given by $d_1= \sqrt{d^2+d_v^2}$ and $d_2= \sqrt{(d_0-d)^2+d_v^2}$, respectively.
 By varying the value of $d$, we examine the minimum transmit power required for serving the user with the given SNR target. We compare the following schemes: 1) Lower bound: solving (P1) with $b\rightarrow \infty$ by using semidefinite programming in \cite{wu2018IRS};  2) Exhaustive search with 1-bit IRS: solving (P1) by searching all possible combinations of binary $(K=2)$  phase shifts; 3) AO with 1-bit IRS: using the AO algorithm in Section 4;  4) Initialization scheme with 1-bit IRS in Section 4; 5)  The scheme without using the IRS by setting  $\bm{w} = \sqrt{\gamma \sigma^2}{\bm{h}_d}/{\|\bm{h}_d\|^2}$.
In Fig. \ref{simulation:distance}, we compare  the  transmit power required at the AP by the above schemes versus the AP-user horizontal distance.
First,  it  is observed that the required transmit power of using 1-bit phase shifters is significantly lower than that without the IRS when the user locates in the vicinity of the IRS. This demonstrates the practical usefulness of IRS in creating a ``signal hotspot'' even with very coarse and low-cost phase shifters.
 Moreover, one can observe that the 1-bit phase shifters suffer power loss compared to the case  with continuous phase shifts. This is expected since due to coarse discrete phase shifts,  the  multi-path signals  from the AP including those reflected and non-reflected by the IRS  cannot be perfectly aligned in phase at the receiver, thus resulting in a power loss.
Finally,  it is observed that compared to the exhaustive search scheme, the proposed AO algorithm and the initialization scheme both achieve near-optimal  performance.

To validate the analytical result in Proposition \ref{scaling:law}, we further compare in Fig.  \!\ref{simulation:N} the AP transmit power  versus the number of reflecting elements $N$ at the IRS when $d=50$ m. In particular, we consider both $b=1$ and $b=2$ for discrete phase shifts at the IRS.  From Fig. \!\ref{simulation:N}, it is  observed that as $N$ increases, the performance gap between the proposed scheme (for both $b=1$ and $b=2$) and the lower bound first increases and then approaches a constant that is determined by $\eta(b)$ given in \eqref{ratio} (i.e., $\eta (1)  =-3.9224$ dB  and $\eta (2)=-0.9224$ dB). This is expected since when $N$ is moderate,  the signal power of the AP-user link is comparable to that of the IRS-user link, thus the misalignment of multi-path signals due to discrete phase shifts becomes more pronounced with increasing $N$. However, when $N$ is sufficiently large such that the reflected signal power by the IRS  dominates the total received power at the user, the performance loss arising from the phase quantization error converges to that by the asymptotic analysis given in Proposition  \ref{scaling:law}.
In addition, one can observe that in this case the gain achieved by the AO scheme over the initialization scheme is more evident with $b=1$ compared to $b=2$.

 \begin{figure}[!t]
\centering
\includegraphics[width=0.35\textwidth]{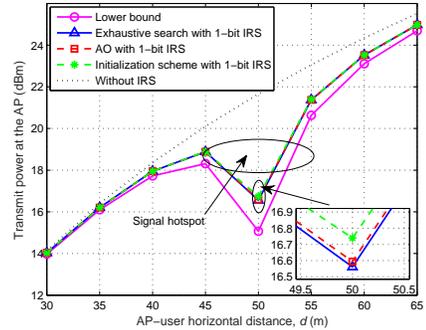}
\vspace{-0.4cm}
\caption{AP transmit power versus AP-user horizontal distance. } \label{simulation:distance} \vspace{-4.5mm}
\end{figure}

\begin{figure}[!t]
\centering
\includegraphics[width=0.35\textwidth]{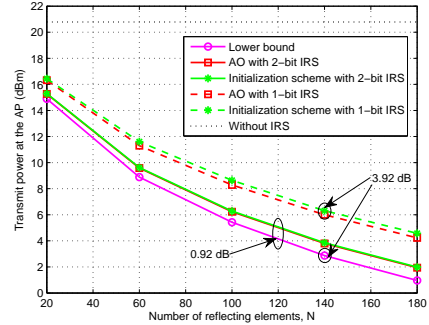}
\vspace{-0.4cm}
\caption{AP transmit power versus the number of  elements.  } \label{simulation:N} \vspace{-5mm}
\end{figure}

\vspace{-0.4cm}
\section{Conclusion}
\vspace{-0.2cm}
In this paper, we studied the beamforming optimization for IRS-enhanced wireless communication under discrete phase-shift constraints at the IRS.  Specifically, the continuous transmit beamforming at the AP and the discrete reflect beamforming at the IRS  were jointly optimized to minimize the transmit power at the AP under the given user SNR target. We proposed an efficient AO-based algorithm to solve this problem, which was shown to achieve near-optimal performance. Furthermore,   we qualitatively analyzed the performance loss caused by using IRS with discrete phase shifts as compared to the ideal case with continuous phase shifts when the number of reflecting elements becomes  asymptotically large. Interestingly, it was shown that even using IRS with 1-bit phase shifters is able to achieve the squared power gain as in the case with continuous phase shifts. Simulation results demonstrated the transmit power saving achieved by using IRS with discrete phase shifts as compared to the case without IRS. In addition, it was shown that obtaining discrete phase shifts by directly quantizing  the continuous phase solution already achieves quite good performance, while the gain by the additional AO is only marginal.


\bibliographystyle{IEEEtran}
\bibliography{IEEEabrv,mybib}

\end{document}